\newcommand\copyrighttext{%
	\footnotesize \textcopyright 2024 IEEE. Personal use of this material is permitted.
	Permission from IEEE must be obtained for all other uses, in any current or future
	media, including reprinting/republishing this material for advertising or promotional
	purposes, creating new collective works, for resale or redistribution to servers or
	lists, or reuse of any copyrighted component of this work in other works.
	DOI: \href{https://doi.org/10.1109/LCSYS.2024.3406053}{10.1109/LCSYS.2024.3406053}}
\newcommand\copyrightnotice{%
	\begin{tikzpicture}[remember picture,overlay]
		\node[anchor=south,yshift=10pt] at (current page.south) {\fbox{\parbox{\dimexpr\textwidth-\fboxsep-\fboxrule\relax}{\copyrighttext}}};
	\end{tikzpicture}%
}
\newtheorem{proposition}{Proposition}
\newtheorem{lemma}{Lemma}
\newtheorem{definition}{Definition}
\newtheorem{remark}{Remark}
\DeclareMathOperator{\vstack}{col}
\def\BibTeX{{\rm B\kern-.05em{\sc i\kern-.025em b}\kern-.08em
    T\kern-.1667em\lower.7ex\hbox{E}\kern-.125emX}}
\begin{document}
\title{Exploring the Links between the
Fundamental Lemma and Kernel Regression}
\author{Oleksii Molodchyk\href{https://orcid.org/0009-0001-1659-1891}{\includegraphics[scale=0.09]{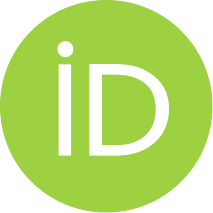}}, Timm Faulwasser\href{https://orcid.org/0000-0002-6892-7406}{\includegraphics[scale=0.09]{orcid.pdf}}, \IEEEmembership{Senior Member, IEEE}
\thanks{OM and TF are with the Institute of Energy Systems, Energy Efficiency and Energy Economics, TU Dortmund University, 44227 Dortmund, Germany (e-mails: oleksii.molodchyk@tuhh.de; timm.faulwasser@ieee.org). TF is also with the Institute of Control Systems, TU Hamburg, 20173 Hamburg, Germany.}}

\maketitle
\copyrightnotice

\begin{abstract}
    Generalizations and variations of the fundamental lemma by Willems \textit{et al.} are an active topic of recent research. In this note, we explore and formalize the links between kernel regression and some known nonlinear extensions of the fundamental lemma. Applying a transformation to the usual linear equation in Hankel matrices, we arrive at an alternative implicit kernel representation of the system trajectories while keeping the requirements on persistency of excitation. We show that this representation is equivalent to the solution of a specific kernel regression problem. We explore the possible structures of the underlying kernel as well as the system classes to which they correspond.
\end{abstract}

\begin{IEEEkeywords}
    System identification, data-driven control, kernel regression, reproducing kernel Hilbert space.
\end{IEEEkeywords}

\section{Introduction}
\label{sec:introduction}
Any dynamic system can be associated with its behavior, i.e., with the set of all input-output-state trajectories it can generate. Direct data-driven methods exploit this to bypass the identification of model parameters and the usage of parametric state-space models. For finite-dimensional linear time-invariant (LTI) systems, a well-known data-driven description is due to Jan C. Willems and co-authors; it is known as the \emph{fundamental lemma}~\cite{Willems2005}. Its key insight is that pre-recorded input-output data spans the space of all finite-length input-output trajectories whereby the pre-recorded input sequence has to be persistently exciting and the underlying system is controllable. The fundamental lemma is at the core of various data-driven control methods, cf. the overviews~\cite{Markovsky2021, Verheijen2023} and references therein. Yet, the original formulation of the lemma is limited by the LTI assumption imposed on the system dynamics.

Hence, a lot of recent research is focused on generalizing the fundamental lemma towards data-driven descriptions of nonlinear systems. One approach is to consider nonlinear systems that admit a decomposition of the dynamics into a linear combination of nonlinear basis functions. This allows to wrap the behavior of a nonlinear system into LTI dynamics such that linear approaches become applicable, cf. the notion of \emph{LTI embedding} \cite{Markovsky2023}.

In this context, there exist several approaches and results to relax the linearity assumptions of the fundamental lemma. For instance, \cite{Berberich2020} considers multiple-input multiple-output (MIMO) Hammerstein-Wiener systems;~\cite{Alsalti2021} discusses single-input single-output (SISO) flat systems and the induced feedback linearization properties; for a recent expansion to the MIMO case, see \cite{Alsalti2023}. Alternatively, \cite{Markovsky2023} leverages basis functions to unify Hammerstein, Volterra, and bilinear systems in one framework. Other works tailoring the fundamental lemma to nonlinear systems include \cite{Rueda-Escobedo2020, Yuan2022, Reinhardt2023, Lazar2023}.

Another avenue towards data-driven description of nonlinear systems considers reproducing kernel Hilbert spaces (RKHS)~\cite{Lian2021, Huang2023}. The underlying advantage is that any RKHS can be uniquely characterized by a single positive definite kernel function \cite{Aronszajn1950, Saitoh2016}. This leads to concise formulations that can be used for data-driven predictive control of nonlinear systems. Similar to regression in RKHS, Gaussian Processes (GPs) can provide data-based predictions by interpreting kernel functions as measures of similarity. This renders GPs suitable for data-driven predictive control applications~\cite{Umlauft2020,Maiworm2021}.

One of the drawbacks of kernel formulations is that the deduction of the requirements on the input signals (e.g., persistency of excitation) is not straightforward. Hence, in this letter, we investigate the correspondence between specific kernel structures and data-driven modeling of specific classes of nonlinear systems. In particular, we use the implicit kernel formulation from \cite{Huang2023} and we show which kernel structures allow us to deduce the required order of persistency of excitation of the inputs. In contrast to \cite{Lazar2023}, which also considers kernel regression for nonlinear systems, we derive a result that stays equivalent to the original fundamental lemma \cite{Willems2005} in case a linear kernel is selected. 

The remainder of the paper is structured as follows: Section~\ref{sec:preliminaries} recalls the fundamental lemma and discusses kernel regression and its RKHS interpretation. Later, we recap the kernelized setting of the fundamental lemma from \cite{Huang2023}, which we use as a blueprint. In Section~\ref{sec:results}, we discuss the kernel structures that can be employed for nonlinear data-driven prediction as well as the associated identifiability requirements. The letter ends with conclusions in Section~\ref{sec:conclusion}. 
\vspace*{2mm}

\subsubsection*{Notation}
For $w : \mathbb{N}_0 \to \mathbb{R}^{n_w}$ and  $a,b \in \mathbb{N}_0: b \geq a$, the restriction of $w$ to $\mathbb{I}_{[a,b]} \doteq \left[a,b\right] \cap \mathbb{N}_0$ is denoted as $w_{[a,b]}$. We identify $T \doteq b - a + 1$ as the length of $w_{[a,b]}$. Whenever no confusion can arise, we interpret $w_{[a,b]}$ as the column-stacked vector $w_{[a,b]} \doteq \vstack(w(a), \ldots, w(b)) \in \mathbb{R}^{Tn_w}$. The function $\vstack$ stacks a list of column vectors/matrices vertically, preserving its original order. We denote the identity matrix as $I$, the Frobenius norm as $\norm{\cdot}_F$, and the Euclidean norm as $\norm{\cdot}_2$.

\section{Preliminaries} 
\label{sec:preliminaries} 
\subsection{The Fundamental Lemma}
For $w_{[0, T-1]}$ with $T \in \mathbb{N}$ let the Hankel matrix of depth $L \in \mathbb{I}_{[1,T]}$ be defined as
\[
    \mathscr{H}_L(w_{[0, T-1]}) \doteq \mqty[w_{[0,L-1]} & w_{[1,L]} & \cdots & w_{[T-L,T-1]}].
\]
\begin{definition}[Persistency of excitation]
    A given $T$-long trajectory $w_{[0, T-1]}$ is called persistently exciting of order $L$ if $\rank \mathscr{H}_L(w_{[0, T-1]}) = L n_w$. \hfill {\small{$\Box$}}
\end{definition}
\vspace*{1mm}

Consider a discrete-time LTI system in minimal state-space representation
\begin{equation}
	\label{eq:lti_system}
	\left\{
	\begin{array}{l l}
		x(t+1) &= A x(t) + B u(t), \quad x(0) \doteq x^0,\\
		y(t) &= C x(t) + D u(t). 
	\end{array}
	\right.
\end{equation}
Here, the state trajectory $x: \mathbb{N}_0 \to \mathbb{R}^{n_x}$, the input trajectory $u: \mathbb{N}_0 \to \mathbb{R}^{n_u}$, and the output trajectory $y: \mathbb{N}_0 \to \mathbb{R}^{n_y}$ are related through the matrices $A \in \mathbb{R}^{n_x \times n_x}$, $B \in \mathbb{R}^{n_x \times n_u}$, $C \in \mathbb{R}^{n_y \times n_x}$, and $D \in \mathbb{R}^{n_y \times n_u}$ at each time $t \in \mathbb{N}_0$. Throughout the paper, the pair $(A,B)$ is assumed to be controllable.
\begin{definition}[Manifest behavior {\cite{Polderman1998}}] \label{def:manifest_b}
    For any $L \in \mathbb{N}$ let 
    \[
    	\mathscr{B}_L \doteq 
    	\left\{
    	\mqty[u_{[0, L - 1]} \\ y_{[0, L - 1]}] \in \mathbb{R}^{L(n_u + n_y)}
        \middle\vert
    	\begin{array}{l}
    		\text{there exist } \, x^0 \in \mathbb{R}^{n_x} \\
            \text{and } x_{[0,L]} \in \mathbb{R}^{(L+1)n_x}  \\
    		\text{such that \eqref{eq:lti_system} holds} \\
    		\text{for all } t \in \mathbb{I}_{[0,L-1]}
    	\end{array} 
    	\right\}
    \]
    be the manifest behavior of \eqref{eq:lti_system} truncated to length $L$. Put differently, $\mathscr{B}_L$ consists of all $L$-long input-output trajectories that can be generated by the LTI system~\eqref{eq:lti_system}. \hfill {\small{$\Box$}} 
\end{definition}
\vspace*{1mm}

The following result is known as the fundamental lemma. It characterizes the subspace of $\mathbb{R}^{L(n_u + n_y)}$ which coincides with $\mathscr{B}_L$. Note that throughout the present paper, observability of \eqref{eq:lti_system} is not assumed since the states are not included in the manifest behavior $\mathscr{B}_L$, cf. Definition~\ref{def:manifest_b}.
\begin{lemma}[{\cite[Theorem~1]{Willems2005}}] \label{lem:fl}
	Consider a pre-recorded input-output trajectory $\vstack(\bar{u}_{[0,T-1]}, \bar{y}_{[0,T-1]}) \in \mathscr{B}_T$, where the input trajectory $\bar{u}_{[0,T-1]}$ is persistently exciting of order $L + n_x$ and the underlying system \eqref{eq:lti_system} is controllable. Then, $\vstack(u_{[0,L-1]}, y_{[0,L-1]}) \in \mathscr{B}_L$ if and only if
	\begin{equation} \label{eq:fl}
		\mqty[u_{[0,L-1]} \\ y_{[0,L-1]}] = \mqty[\mathscr{H}_L(\bar{u}_{[0,T-1]}) \\ \mathscr{H}_L(\bar{y}_{[0,T-1]})] g
	\end{equation}
    holds for some column selection vector $g \in \mathbb{R}^{T-L+1}$.  \hfill {\small{$\Box$}}
\end{lemma}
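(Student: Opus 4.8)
The whole statement reduces to a single claim: the column span of the data matrix on the right-hand side of~\eqref{eq:fl} equals $\mathscr{B}_L$. Once this is established, both implications are immediate, since~\eqref{eq:fl} merely asserts that $\vstack(u_{[0,L-1]}, y_{[0,L-1]})$ lies in that column span. The inclusion $\subseteq$ is the easy part and handles the ``if'' direction directly: each column of $\mqty[\mathscr{H}_L(\bar{u}_{[0,T-1]}) \\ \mathscr{H}_L(\bar{y}_{[0,T-1]})]$ equals $\vstack(\bar{u}_{[k,k+L-1]}, \bar{y}_{[k,k+L-1]})$ for some $k \in \mathbb{I}_{[0,T-L]}$, i.e., a length-$L$ window of the recorded trajectory. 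By time-invariance such a window is generated by \eqref{eq:lti_system} from the initial state $\bar{x}(k)$, hence belongs to $\mathscr{B}_L$; since \eqref{eq:lti_system} is linear, $\mathscr{B}_L$ is a subspace, and therefore closed under the linear combination encoded by $g$.

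For the reverse inclusion I would pass to a state-augmented factorization. Introducing the state-sequence matrix $\bar{X} \doteq \mqty[\bar{x}(0) & \cdots & \bar{x}(T-L)]$ and the lifted output relation $\bar{y}_{[k,k+L-1]} = \Gamma_L \bar{x}(k) + \mathcal{T}_L \bar{u}_{[k,k+L-1]}$, where $\Gamma_L$ is the extended observability matrix and $\mathcal{T}_L$ the Toeplitz matrix of Markov parameters, gives
\begin{equation*}
    \mqty[\mathscr{H}_L(\bar{u}_{[0,T-1]}) \\ \mathscr{H}_L(\bar{y}_{[0,T-1]})] = \mqty[I & 0 \\ \mathcal{T}_L & \Gamma_L] \mqty[\mathscr{H}_L(\bar{u}_{[0,T-1]}) \\ \bar{X}].
\end{equation*}
Writing $M$ for the left factor, note that $\mathscr{B}_L = \operatorname{im} M$: as $(u_{[0,L-1]}, x^0)$ ranges over $\mathbb{R}^{L n_u} \times \mathbb{R}^{n_x}$, the map $M$ produces exactly the trajectories consistent with \eqref{eq:lti_system}. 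Consequently, the column span of the data matrix is $M$ applied to the column span of the input-state matrix $\mqty[\mathscr{H}_L(\bar{u}_{[0,T-1]}) \\ \bar{X}]$. Thus it suffices to show that this input-state matrix has full row rank $L n_u + n_x$, for then its columns span all of $\mathbb{R}^{L n_u + n_x}$ and the data matrix spans $\operatorname{im} M = \mathscr{B}_L$, establishing equality.

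The full-row-rank property is where the hypotheses genuinely enter, and I expect it to be the main obstacle. Suppose a nonzero vector $\mqty[\eta^\top & \xi^\top]$, with $\eta \in \mathbb{R}^{L n_u}$ and $\xi \in \mathbb{R}^{n_x}$, annihilates $\mqty[\mathscr{H}_L(\bar{u}_{[0,T-1]}) \\ \bar{X}]$, i.e.\ $\eta^\top \bar{u}_{[k,k+L-1]} + \xi^\top \bar{x}(k) = 0$ for every $k \in \mathbb{I}_{[0,T-L]}$. The plan is to eliminate the state block by propagating the dynamics $\bar{x}(k+1) = A\bar{x}(k) + B\bar{u}(k)$ and invoking controllability of $(A,B)$, so that the term $\xi^\top \bar{x}(k)$ is rewritten through the reachability matrix as a linear combination of an $n_x$-long block of inputs. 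This converts the annihilation relation into a single nontrivial left-null vector of $\mathscr{H}_{L+n_x}(\bar{u}_{[0,T-1]})$; controllability ensures that no nonzero state direction $\xi$ can be absorbed without leaving a genuine input dependency. Since $\bar{u}_{[0,T-1]}$ is persistently exciting of order $L + n_x$, we have $\rank \mathscr{H}_{L+n_x}(\bar{u}_{[0,T-1]}) = (L+n_x)n_u$, so this matrix has no nonzero left-null vector, forcing $\eta = 0$ and $\xi = 0$. The delicate bookkeeping lies in tracking the boundary contributions from the initial state and in verifying that the constructed coefficient vector is indeed nonzero whenever $(\eta,\xi) \neq 0$; this is the step that genuinely couples persistency of excitation with controllability, and it is the crux of the argument.
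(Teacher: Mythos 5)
First, a point of reference: the paper itself does not prove this statement at all---it is imported verbatim from \cite[Theorem~1]{Willems2005}---so your attempt has to be measured against the classical proof of that reference. Your skeleton is the standard state-space route and is correct as far as it goes: the column span of the stacked Hankel matrix lies in $\mathscr{B}_L$ because each column is a length-$L$ window of the recorded trajectory and $\mathscr{B}_L$ is a subspace; the factorization through $M \doteq \mqty[I & 0 \\ \mathcal{T}_L & \Gamma_L]$ with $\operatorname{im} M = \mathscr{B}_L$ is valid; and equality of the spans would indeed follow once $\rank \mqty[\mathscr{H}_L(\bar{u}_{[0,T-1]}) \\ \bar{X}] = Ln_u + n_x$ is established.

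The gap is that this rank statement is precisely the key lemma of Willems \emph{et al.}---it is the entire mathematical content of the result, the one place where persistency of excitation of order $L+n_x$ and controllability interact---and you do not prove it: you sketch a plan and explicitly defer ``the crux.'' Worse, the plan as stated cannot work. You propose to rewrite $\xi^\top \bar{x}(k)$ ``through the reachability matrix as a linear combination of an $n_x$-long block of inputs,'' but no such rewriting exists: the state is not a function of the preceding $n_x$ inputs, since one always retains a residual term $\xi^\top A^{n_x}\bar{x}(k-n_x)$, and controllability does not remove it (that would confuse controllability with a deadbeat or finite-memory property). The known arguments are genuinely more involved: for instance, iterate the annihilation identity $\xi^\top \bar{x}(k) = -\eta^\top \bar{u}_{[k,k+L-1]}$ through the dynamics to represent $\xi^\top A^j \bar{x}(k)$, for each $j$, as a linear functional of $\bar{u}_{[k,k+L+j-1]}$; then invoke Cayley--Hamilton to cancel the state terms and assemble a left-annihilator of $\mathscr{H}_{L+n_x}(\bar{u}_{[0,T-1]})$; the delicate step---and the place where controllability is actually used---is showing that this assembled annihilator is nonzero whenever $(\eta,\xi) \neq 0$, so that persistency of excitation yields the contradiction. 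Without that argument, your proposal assumes essentially what the fundamental lemma asserts.
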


\subsection{Reproducing Kernel Hilbert Spaces}
Given a non-empty, finite set of $N \in \mathbb{N}$ training data points
\begin{equation} \label{eq:data}
    \mathcal{D} \doteq \left\{(\mathsf{z}_i, \mathsf{y}_i) \in \mathcal{Z} \times \mathcal{Y} \mid i \in \mathbb{I}_{[1,N]} \right\}
\end{equation}
obtained on some $\mathsf{z}$-data domain $\mathcal{Z}$ and $\mathsf{y}$-data co-domain $\mathcal{Y}$, we are interested in learning a function $f: \mathcal{Z} \to \mathcal{Y}$ that fits the data $\mathcal{D}$. To this end, the goodness-of-fit is measured using a loss function $c$ whose value depends on $\mathcal{D}$ and $f$. Additionally, we search only for the functions within some RKHS. For the sake of simplified exposition, we consider the setting with scalar-valued functions, i.e., $\mathcal{Y} \doteq \mathbb{R}$.
\begin{definition}[RKHS \cite{Aronszajn1950, Berlinet2004, Saitoh2016}]
    Let the space $\mathcal{H}$ of functions $\mathcal{Z} \to \mathbb{R}$ be Hilbert space equipped with a real-valued inner product $\expval{\cdot,\cdot}_\mathcal{H}$. If there exists $\kappa: \mathcal{Z} \times \mathcal{Z} \to \mathbb{R}$ such that:
    \begin{itemize}
        \item[i)] For any $\mathsf{z} \in \mathcal{Z}$, the function $\kappa(\cdot, \mathsf{z})$ belongs to $\mathcal{H}$.
        \item[ii)] For any $\mathsf{z} \in \mathcal{Z}$ and any function $g \in \mathcal{H}$, the value $g(\mathsf{z})$ can be reproduced via $g(\mathsf{z}) = \expval{g, \kappa(\cdot, \mathsf{z})}_\mathcal{H}$. \vspace*{2mm}
    \end{itemize}  
    then $\mathcal{H}$ is a reproducing kernel Hilbert space. Item ii) is called the \emph{reproducing property}, whereas the function $\kappa$ itself is the \emph{reproducing kernel}.  \hfill {\small{$\Box$}} 
\end{definition}
\vspace*{1mm}

The correspondence between the RKHS $\mathcal{H}$ and its reproducing kernel $\kappa$ is one-to-one \cite{Aronszajn1950, Saitoh2016}. Furthermore, a function $\kappa: \mathcal{Z} \times \mathcal{Z} \to \mathbb{R}$ is the reproducing kernel of some RKHS $\mathcal{H}$ if and only if for any data set $\mathcal{D} \in (\mathcal{Z} \times \mathbb{R})^N$ as in \eqref{eq:data}, the corresponding $N \times N$ real Gram matrix (Gramian) $\mathbf{K} \doteq \left[\kappa(\mathsf{z}_i, \mathsf{z}_j)\right]_{1 \leq i, j \leq N}$ is symmetric and positive semi-definite. 

Given $\mathcal{D}$ from \eqref{eq:data}, we learn $f$ belonging to the RKHS $\mathcal{H}$ with reproducing kernel $\kappa$ by solving the regression problem
\begin{equation} \label{eq:rkhs_regression}
    \min_{f \in \mathcal{H}} c(\mathcal{D}, \vstack(f(\mathsf{z}_1), \ldots, f(\mathsf{z}_N))) + g\left(\norm{f}_\mathcal{H}\right),
\end{equation}
where $c: (\mathcal{Z} \times \mathbb{R})^N \times \mathbb{R}^N \to \mathbb{R} \cup \{\infty\}$ is the considered loss function, $g: \mathbb{R} \to \mathbb{R}$ is a monotonically non-decreasing regularization, and $\norm{\cdot}_\mathcal{H} \doteq \sqrt{\expval{\cdot,\cdot}_\mathcal{H}}$ is the induced norm on $\mathcal{H}$.
According to the Moore-Aronszajn theorem \cite{Aronszajn1950}, $\mathcal{H}$ contains the dense subspace $\mathrm{span} \lbrace \kappa(\cdot, \mathsf{z}) \rbrace_{\mathsf{z} \in \mathcal{Z}}$ and is thus often infinite-dimensional. However, relying on the seminal representer theorem~\cite{Kimeldorf70a, Schoelkopf01a}, one can guarantee that at least one minimizer $f^\star \in \mathcal{H}$ of \eqref{eq:rkhs_regression} admits a representation as a finite linear combination using the $\mathsf{z}$-data points in $\mathcal{D}$, i.e., 
\begin{equation} \label{eq:rkhs_representation}
    f^\star(\mathsf{z}) \doteq \textstyle\sum\nolimits_{i=1}^N \alpha_i \kappa(\mathsf{z},\mathsf{z}_i).
\end{equation}
The coefficients $\alpha = \vstack(\alpha_1, \ldots, \alpha_N)$ are obtained by substituting \eqref{eq:rkhs_representation} in \eqref{eq:rkhs_regression} and making use of $\norm{f^\star}_\mathcal{H}^2 = \alpha^\top \mathbf{K} \alpha$ \cite{Bishop2006}. This allows to optimize in the finite-dimensional Euclidean space $\mathbb{R}^N$, thereby significantly simplifying problem~\eqref{eq:rkhs_regression}.

\begin{remark} \label{rem:multi-target_regression}
    Statement \eqref{eq:rkhs_representation} can be extended to a vector-valued function $f: \mathcal{Z} \to \mathbb{R}^{n}$ with $n \in \mathbb{N}$, i.e., $f \doteq \vstack(f_1, \ldots, f_n)$, where $f_j : \mathcal{Z} \to \mathbb{R}$, $\forall \, j \in \mathbb{I}_{[1,n]}$. Suppose the data $\mathcal{D}$ in \eqref{eq:data} is given with the co-domain $\mathcal{Y} \doteq \mathbb{R}^n$. Further, assume that each component $f_j$ is a member of the RKHS $\mathcal{H}$, whose kernel $\kappa$ is the same as in \eqref{eq:rkhs_regression} and \eqref{eq:rkhs_representation}. Next, construct $n$ separate problems of the form \eqref{eq:rkhs_regression}, each having $f_j, j \in \mathbb{I}_{[1,n]}$ as its decision variable. Relying on the representer theorem, we express the corresponding minimizers via $f_j^\star(\mathsf{z}) \doteq \sum_{i=1}^N \alpha_{ij} \kappa(\mathsf{z},\mathsf{z}_i)$, where $\alpha \doteq [\alpha_1 \, \cdots \, \alpha_n]$ is the $N \times n$ coefficient matrix filled with vectors $\alpha_j \doteq \vstack(\alpha_{1j}, \ldots, \alpha_{Nj}) \in \mathbb{R}^N$. \hfill {\small{$\Box$}}
\end{remark}

In practice, for $i \neq j$, the components $f_i$ and $f_j$ might lie in different RKHS spanned by $\kappa_i$ and $\kappa_j$, respectively. In this case, one could use the compound kernel $\kappa(\cdot,\cdot) \doteq \sum_{i=1}^n \kappa_i(\cdot,\cdot)$.

\subsection{Kernel-based Regression} 
The key idea behind LTI embeddings proposed in \cite{Markovsky2023} is to represent the system dynamics as a linear combination of nonlinear basis functions (or \textit{features}). Inspired by this concept, we continue by recapping feature-based regression. Later, we draw the connection to kernel-based regression in RKHS. 

Since one is often concerned with predicting multiple future outputs and/or multiple entries of the same output, we consider the vector-valued setting with $f: \mathcal{Z} \to \mathcal{Y} \doteq \mathbb{R}^{n}$, cf. Remark~\ref{rem:multi-target_regression}.

Suppose that $f$ admits a linear representation $f(\mathsf{z}) = \theta^\top \phi(\mathsf{z}), \forall \, \mathsf{z} \in \mathcal{Z}$ with the parameter matrix $\theta \in \mathbb{R}^{n_\theta \times n}$ and the feature map $\phi : \mathcal{Z} \to \mathbb{R}^{n_\theta}$. Then, in the least-squares sense, the optimal parameter $\theta^\star$ specifying  $f^\star(\mathsf{z}) = \theta^{\star \top} \phi(\mathsf{z})$, is found as
\begin{equation} \label{eq:regression}
	\theta^\star \in \arg\min_{\theta \in \mathbb{R}^{n_\theta \times n}} \; \norm{\Phi \theta - \mathbf{y}}_F^2,
\end{equation} 
where the $i$-th row of $\Phi \in \mathbb{R}^{N \times n_\theta}$ is defined as $\phi(\mathsf{z}_i)^\top$, while the $i$-th row of $\mathbf{y} \in \mathbb{R}^{N \times n}$ is given by $\mathsf{y}_i^\top$.
With the help of the Moore-Penrose pseudo-inverse operator $(\cdot)^\dagger$, a minimizer $\theta^\star$ can be calculated in closed form via $\theta^\star \doteq \Phi^\dagger \mathbf{y}$.

Alternatively, using the \emph{kernel trick}, i.e., the ansatz $\theta \doteq \Phi^\top \alpha$ with $\alpha \in \mathbb{R}^{N \times n}$ \cite[Chapter~6]{Bishop2006}, \eqref{eq:regression} can be rewritten as
\begin{equation} \label{eq:regression_kernel_trick}
	\alpha^\star \in \arg\min_{\alpha \in \mathbb{R}^{N \times n}} \; \norm{\mathbf{K} \alpha - \mathbf{y}}_F^2.
\end{equation}
Here, one of the minimizers can be stated as $\alpha^\star \doteq \mathbf{K}^\dagger \mathbf{y}$ in terms of the (possibly singular) Gram matrix $\mathbf{K} \doteq \Phi \Phi^\top$ of the  reproducing kernel $\kappa(\mathsf{z}, \mathsf{z}^\prime) \doteq \phi(\mathsf{z})^\top \phi(\mathsf{z}^\prime)$. Consequently, the linear combination $f(\mathsf{z}) = \theta^\top \phi(\mathsf{z})$ can be written as
\begin{equation} \label{eq:reg_kernel_sol}
    f^\star(\mathsf{z}) \doteq \mathbf{y}^\top \mathbf{K}^\dagger \mathbf{k}(\mathsf{z}),
\end{equation} 
where $\mathbf{k}(\mathsf{z}) \doteq \mqty[\kappa(\mathsf{z}, \mathsf{z}_1) & \cdots & \kappa(\mathsf{z}, \mathsf{z}_N)]^\top$ is the kernel $\kappa$, centered at all $\mathsf{z}$-data points in $\mathcal{D}$, cf. \eqref{eq:rkhs_representation} and Remark~\ref{rem:multi-target_regression}. The recent work \cite{Lazar2023} views $\mathbf{k}(\mathsf{z})$ as a data-dependent feature vector.

Following the original fundamental lemma \cite{Willems2005}, we stick to the noise-free setting. Put differently, given the true function $f$, we assume that the data $\mathcal{D}$ is exact, i.e., $f(\mathsf{z}_i) = \mathsf{y}_i, \forall \, i \in \mathbb{I}_{[1,N]}$. Similarly to Remark~\ref{rem:multi-target_regression}, we also assume that each $f_j, j \in \mathbb{I}_{[1,n]}$ of $f: \mathcal{Z} \to \mathbb{R}^{n}$ belongs to RKHS $\mathcal{H}$ with the kernel $\kappa$.

Since $\mathcal{D}$ is assumed to contain noise-free data, we do not consider regularization in \eqref{eq:regression} and \eqref{eq:regression_kernel_trick}. Instead, the approach in \eqref{eq:reg_kernel_sol} relies on pseudo-inverse regularization, which is often applied for GPs~\cite{Mohammadi2017}. Based on the data $\mathcal{D}$, for each $z \in \mathcal{Z}$, the GP regression with pseudo-inverse regularization delivers a mean estimate of $f(\mathsf{z})$, which coincides with $f^\star(\mathsf{z})$ from \eqref{eq:reg_kernel_sol}. 

An additional useful feature of the GP perspective is the posterior variance estimate
\begin{equation} \label{eq:gp_variance}
    \sigma^2(\mathsf{z}) \doteq \kappa(\mathsf{z}, \mathsf{z}) - \mathbf{k}(\mathsf{z})^\top \mathbf{K}^\dagger \mathbf{k}(\mathsf{z}).
\end{equation}
This variance describes the uncertainty surrounding each entry $f_j^\star(\mathsf{z})$ of $f^\star(\mathsf{z})$ due to lack of exploration of the RKHS, see Appendix\ref{app:gp_proof}. Observe that $\sigma^2(\mathsf{z})$ depends only on the choice of $\kappa$ and on the $\mathsf{z}$-data in $\mathcal{D}$. Hence, $\sigma^2(\mathsf{z})$ stays the same for each index $j \in \mathbb{I}_{[1,n]}$. Next we extend a result from \cite{Fasshauer2011, Hu2023} to singular Gramians; its proof is given in Appendix\ref{app:error_proof}.
\begin{lemma}[RKHS error bound with singular Gramian] \label{lem:absolute_error}
    Let $f: \mathcal{Z} \to \mathcal{Y} \doteq \mathbb{R}^{n}$ be unknown with $f \doteq \vstack(f_1, \ldots, f_n)$. Assume that $f_j \in \mathcal{H}, \forall \, j \in \mathbb{I}_{[1,n]}$, where $\mathcal{H}$ is a RKHS and $\kappa$ is the corresponding reproducing kernel, cf. Remark~\ref{rem:multi-target_regression}. Then, for the data $\mathcal{D}$ from \eqref{eq:data}, the estimate $f^\star$ from \eqref{eq:reg_kernel_sol} satisfies
    \begin{equation} \label{eq:abs_error_bound}
        \norm{f(\mathsf{z}) - f^\star(\mathsf{z})}_2^2 \leq \sigma^2(\mathsf{z}) \textstyle\sum\nolimits_{j=1}^n \norm{f_j}^2_\mathcal{H}, \quad \forall \, \mathsf{z} \in \mathcal{Z}. \vspace*{-2mm}
    \end{equation}
    \hfill {\small{$\Box$}}
\end{lemma}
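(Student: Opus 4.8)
The plan is to prove the bound componentwise and then sum. Since the data is noise-free, the $j$-th column of $\mathbf{y}$ equals $\vstack(f_j(\mathsf{z}_1),\ldots,f_j(\mathsf{z}_N))$, so from \eqref{eq:reg_kernel_sol} the $j$-th entry of $f^\star(\mathsf{z})$ reads $f_j^\star(\mathsf{z}) = \sum_{i=1}^N \beta_i(\mathsf{z}) f_j(\mathsf{z}_i)$ with the coefficient vector $\beta(\mathsf{z}) \doteq \mathbf{K}^\dagger \mathbf{k}(\mathsf{z}) \in \mathbb{R}^N$. First I would invoke the reproducing property twice: it gives $f_j(\mathsf{z}) = \expval{f_j, \kappa(\cdot,\mathsf{z})}_\mathcal{H}$ and $f_j(\mathsf{z}_i) = \expval{f_j, \kappa(\cdot,\mathsf{z}_i)}_\mathcal{H}$. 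Collecting terms, the componentwise error becomes a single inner product $f_j(\mathsf{z}) - f_j^\star(\mathsf{z}) = \expval{f_j, e_\mathsf{z}}_\mathcal{H}$ against the \emph{error representer} $e_\mathsf{z} \doteq \kappa(\cdot,\mathsf{z}) - \sum_{i=1}^N \beta_i(\mathsf{z})\,\kappa(\cdot,\mathsf{z}_i)$, which lies in $\mathcal{H}$ because it is a finite linear combination of kernel sections.

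Next I would apply the Cauchy--Schwarz inequality in $\mathcal{H}$ to obtain $\abs{f_j(\mathsf{z}) - f_j^\star(\mathsf{z})} \le \norm{f_j}_\mathcal{H}\,\norm{e_\mathsf{z}}_\mathcal{H}$. The crux is to show that $\norm{e_\mathsf{z}}_\mathcal{H}^2$ equals exactly the posterior variance $\sigma^2(\mathsf{z})$ from \eqref{eq:gp_variance}. Expanding the squared norm with the reproducing property and writing everything in terms of $\mathbf{k}(\mathsf{z})$ and $\mathbf{K}$ yields $\norm{e_\mathsf{z}}_\mathcal{H}^2 = \kappa(\mathsf{z},\mathsf{z}) - 2\,\beta(\mathsf{z})^\top \mathbf{k}(\mathsf{z}) + \beta(\mathsf{z})^\top \mathbf{K}\,\beta(\mathsf{z})$.

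The step I expect to be the main obstacle --- and the one that actually constitutes the extension to \emph{singular} Gramians --- is simplifying the quadratic term. Substituting $\beta(\mathsf{z}) = \mathbf{K}^\dagger \mathbf{k}(\mathsf{z})$, the linear term gives $\mathbf{k}(\mathsf{z})^\top \mathbf{K}^\dagger \mathbf{k}(\mathsf{z})$, while the quadratic term requires the Moore--Penrose identities: symmetry of $\mathbf{K}^\dagger$ (inherited from the symmetric, positive semi-definite $\mathbf{K}$) together with $\mathbf{K}^\dagger \mathbf{K}\,\mathbf{K}^\dagger = \mathbf{K}^\dagger$ collapse $\beta(\mathsf{z})^\top \mathbf{K}\,\beta(\mathsf{z})$ to the same expression $\mathbf{k}(\mathsf{z})^\top \mathbf{K}^\dagger \mathbf{k}(\mathsf{z})$. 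In the nonsingular case these reduce to $\mathbf{K}^\dagger = \mathbf{K}^{-1}$ and the simplification is immediate; the pseudo-inverse axioms are precisely what makes the argument survive rank deficiency. Hence $\norm{e_\mathsf{z}}_\mathcal{H}^2 = \kappa(\mathsf{z},\mathsf{z}) - \mathbf{k}(\mathsf{z})^\top \mathbf{K}^\dagger \mathbf{k}(\mathsf{z}) = \sigma^2(\mathsf{z})$; as a useful byproduct, this also certifies $\sigma^2(\mathsf{z}) \ge 0$ even when $\mathbf{K}$ is singular.

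Finally I would square the componentwise estimate, $\abs{f_j(\mathsf{z}) - f_j^\star(\mathsf{z})}^2 \le \sigma^2(\mathsf{z})\,\norm{f_j}_\mathcal{H}^2$, and sum over $j \in \mathbb{I}_{[1,n]}$. Since $\norm{f(\mathsf{z}) - f^\star(\mathsf{z})}_2^2 = \sum_{j=1}^n \abs{f_j(\mathsf{z}) - f_j^\star(\mathsf{z})}^2$, this yields \eqref{eq:abs_error_bound}; the bound holds for every $\mathsf{z} \in \mathcal{Z}$ because no property of the particular evaluation point was used.
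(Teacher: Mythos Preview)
Your proposal is correct and follows essentially the same route as the paper: write the componentwise error as an inner product against the error representer, apply Cauchy--Schwarz, expand the squared RKHS norm, and use the Moore--Penrose identity $\mathbf{K}^\dagger \mathbf{K}\,\mathbf{K}^\dagger = \mathbf{K}^\dagger$ to collapse the quadratic term before summing over $j$. The only cosmetic difference is notation ($\beta(\mathsf{z})$ versus the paper's $q$); all substantive steps coincide.
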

Note that $\sigma^2(\mathsf{z})$ in \eqref{eq:abs_error_bound} can be factored out since each $f_j$ is in $\mathcal{H}$.

\subsection{Kernel Representation of the Fundamental Lemma}
We consider a time-invariant linear or nonlinear system described by a difference equation with finite lag $\ell \in \mathbb{N}_0$ \cite{Markovsky2023}. Following \cite{Huang2023}, we assume temporarily that an $L$-long output trajectory $y_{[\ell,\ell+L-1]}$ of the system can be uniquely deduced from: i) a corresponding input sequence $u_{[\ell,\ell+L-1]}$; ii) an initializing input-output trajectory $(u_{[0, \ell - 1]}, y_{[0, \ell - 1]})$. Thus, we define the concatenation $\mathsf{z} \doteq \vstack(u_{[0, \ell+L-1]}, y_{[0, \ell-1]})$ and set $\mathcal{Z} \doteq \mathbb{R}^{(\ell + L)n_u + \ell n_y}$ accordingly. Suppose that each entry of $y_{[\ell,\ell+L-1]}$ in $\mathcal{Y} \doteq \mathbb{R}^{L n_y}$ can be expressed as a linear combination of features in $\phi: \mathcal{Z} \to \mathbb{R}$. Then, using $\kappa(\cdot,\cdot) \doteq \phi(\cdot)^\top \phi(\cdot)$, we arrive at the explicit kernel predictor
\begin{equation} \label{eq:nl_dd_predictor_explicit}
	y_{[\ell,\ell+L-1]} = \mathscr{H}_L(\bar{y}_{[\ell, T-1]}) \mathbf{K}^\dagger \mathbf{k}(\mathsf{z}).
\end{equation}
Here, $(\bar{u}_{[0,T-1]}, \bar{y}_{[0, T-1]})$ is the $T$-long pre-recorded input-output trajectory, cf. \eqref{eq:reg_kernel_sol}. The training data needed for the kernel evaluations has a special pattern: It comprises the $N \doteq T - L - \ell + 1$ columns of the stacked Hankel matrix
\[
	\mqty[\mathsf{z}_1 & \cdots & \mathsf{z}_N] \doteq \mqty[\mathscr{H}_{\ell + L}(\bar{u}_{[0, T-1]}) \\ \mathscr{H}_\ell(\bar{y}_{[0, T-L-1]})].
\]
With $g \in \mathbb{R}^{N}$, the implicit counterpart to \eqref{eq:nl_dd_predictor_explicit} reads
\begin{equation} \label{eq:nl_dd_predictor}
	\mqty[\mathbf{k}(\mathsf{z}) \\ y_{[\ell,\ell+L-1]}] = \mqty[\mathbf{K} \\ \mathscr{H}_L(\bar{y}_{[\ell, T-1]})]g.	
\end{equation}
As noted in \cite{Huang2023}, even though the above equation closely resembles the fundamental lemma~\eqref{eq:fl}, there is no guarantee that for a specified $g \in \mathbb{R}^{N}$ the left hand side variables $\mathsf{z} \doteq \vstack(u_{[0, \ell+L-1]}, y_{[0, \ell-1]})$ and $y_{[\ell,\ell+L-1]}$ constitute a valid $(\ell + L)$-long input-output trajectory of the system.
\begin{remark}[Equivalence of kernel predictors]
    The formulations \eqref{eq:nl_dd_predictor_explicit} and \eqref{eq:nl_dd_predictor} are equivalent if and only if the Gram matrix $\mathbf{K}$ is of full rank. If this is not the case, the formulation \eqref{eq:nl_dd_predictor} allows for $y_{[\ell,\ell+L-1]}$ to lie in the affine subspace
    \[
        \left\lbrace\mathscr{H}_L(\bar{y}_{[\ell, T-1]}) \left[\mathbf{K}^\dagger \mathbf{k}(\mathsf{z}) + (I - \mathbf{K}^\dagger \mathbf{K})v\right] \mid \forall \, v \in \mathbb{R}^N \right\rbrace.
    \]
    However, \eqref{eq:nl_dd_predictor_explicit} only delivers a unique solution for $v \doteq 0$ in the above expression. We also note that the $\mathsf{z}$-data points cannot all be chosen freely as the initializing output trajectory $y_{[0,\ell-1]}$ (in most cases) depends on the input sequence $u_{[0,\ell-1]}$. This may lead to rank deficiency in $\mathbf{K}$; in \cite{Huang2023} this is addressed via ridge regression (i.e., quadratic regularization). \hfill {\small{$\Box$}}
\end{remark}
\vspace*{1mm}

Henceforth, we work with the structure \eqref{eq:nl_dd_predictor} proposed by \cite{Huang2023} since it is linear in $g$ as opposed to \cite[Theorem~4]{Lian2021}, which derives a quadratic form in $g$. In contrast to \cite{Huang2023}, we aim for results ensuring the exact representation of system trajectories.

\section{Main Results} \label{sec:results}
We begin by translating the classic fundamental lemma (Lemma~\ref{lem:fl}) with the same requirements on persistency of excitation of the measured input trajectory to a kernel setting. 

Consider an input-output trajectory $\vstack(\bar{u}_{[0,T-1]}, \bar{y}_{[0,T-1]}) \in \mathscr{B}_T$ of \eqref{eq:lti_system}. For any $L \in \mathbb{N}$, let $\kappa_L: (\mathsf{z}, \mathsf{z}^\prime) \in \mathcal{Z} \times \mathcal{Z} \mapsto \mathsf{z}^\top \mathsf{z}^\prime \in \mathbb{R}$ be the linear kernel with $\mathcal{Z} \doteq \mathbb{R}^{L n_u}$. Furthermore, let the $\mathsf{z}$-data of $\mathcal{D}$ comprise the columns $\mathsf{z}_{i+1} \doteq \bar{u}_{[i,L+i-1]}$, $\forall \, i \in \mathbb{I}_{[0,T-L]}$ of $\mathscr{H}_L(\bar{u}_{[0, T-1]})$. For $\kappa_L$, we define both the corresponding Gram matrix $\mathbf{K}_L$ and the vector $\mathbf{k}(\mathsf{z}) \doteq \vstack(\kappa_L(\mathsf{z}, \mathsf{z}_1), \ldots, \kappa_L(\mathsf{z}, \mathsf{z}_{T-L+1}))$.
\begin{lemma}[Kernelized fundamental lemma – LTI case] \label{lem:fl_kernelized}	~\\
	Consider  $\kappa_L(\mathsf{z}, \mathsf{z}^\prime) = \mathsf{z}^\top \mathsf{z}^\prime$ and let $\rank\mathbf{K}_{L+n_x} = (L+n_x)n_u$. Suppose that \eqref{eq:lti_system} is controllable. Then the following holds:
    \begin{itemize}
        \item[i)] $\vstack(u_{[0, L-1]}, y_{[0, L-1]}) \in \mathscr{B}_L$ if and only if there exists a vector $g \in \mathbb{R}^{T-L+1}$ such that
	    \begin{equation} \label{eq:fl_kernelized}
		      \mqty[\mathbf{k}(u_{[0, L-1]}) \\ y_{[0, L-1]}] = \mqty[\mathbf{K}_L \\ \mathscr{H}_L(\bar{y}_{[0, T-1]})] g.
	    \end{equation}
        \item[ii)] Any tuple $(u_{[0, L-1]}, y_{[0, L-1]}, g) \in \mathbb{R}^{L n_u} \times \mathbb{R}^{L n_y} \times \mathbb{R}^{T-L+1}$ that satisfies \eqref{eq:fl} solves \eqref{eq:fl_kernelized} and vice versa. \hfill {\small{$\Box$}}
    \end{itemize}
\end{lemma}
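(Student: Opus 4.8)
The plan is to reduce the whole statement to the classical fundamental lemma (Lemma~\ref{lem:fl}) by exploiting that, for the linear kernel evaluated on the prescribed Hankel-structured data, the Gram objects collapse to products of Hankel matrices. Writing $U \doteq \mathscr{H}_L(\bar{u}_{[0,T-1]})$ and $Y \doteq \mathscr{H}_L(\bar{y}_{[0,T-1]})$, the linear kernel $\kappa_L(\mathsf{z},\mathsf{z}') = \mathsf{z}^\top \mathsf{z}'$ evaluated on the columns $\mathsf{z}_{i+1} = \bar{u}_{[i,L+i-1]}$ of $U$ yields at once $\mathbf{K}_L = U^\top U$ and $\mathbf{k}(u_{[0,L-1]}) = U^\top u_{[0,L-1]}$. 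Substituting these identities, the kernelized relation \eqref{eq:fl_kernelized} becomes $U^\top u_{[0,L-1]} = U^\top U g$ together with $y_{[0,L-1]} = Y g$, whereas the classical relation \eqref{eq:fl} reads $u_{[0,L-1]} = U g$ together with $y_{[0,L-1]} = Y g$. With this rewriting in place, statement ii) is purely linear-algebraic, and statement i) follows by chaining ii) with Lemma~\ref{lem:fl}.

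First I would prove ii). The forward implication \eqref{eq:fl}~$\Rightarrow$~\eqref{eq:fl_kernelized} is immediate: left-multiplying $u_{[0,L-1]} = U g$ by $U^\top$ gives $U^\top u_{[0,L-1]} = U^\top U g = \mathbf{K}_L g$, while the output block is untouched. The converse is where the rank hypothesis enters. I would first translate $\rank \mathbf{K}_{L+n_x} = (L+n_x)n_u$ into persistency of excitation of order $L+n_x$: since $\mathbf{K}_{L+n_x} = \mathscr{H}_{L+n_x}(\bar{u}_{[0,T-1]})^\top \mathscr{H}_{L+n_x}(\bar{u}_{[0,T-1]})$ and $\rank(M^\top M) = \rank M$, the hypothesis is exactly $\rank \mathscr{H}_{L+n_x}(\bar{u}_{[0,T-1]}) = (L+n_x)n_u$. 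A standard row-subset argument then forces $\rank U = L n_u$, i.e.\ full row rank: the top $L n_u$ rows of $\mathscr{H}_{L+n_x}(\bar{u}_{[0,T-1]})$ coincide with the first $T-L-n_x+1$ columns of $U$, and a full-row-rank matrix keeps any subset of its rows linearly independent. Hence $U^\top$ has full column rank, so $\ker U^\top = \{0\}$, and the relation $U^\top(u_{[0,L-1]} - U g) = 0$ read off from \eqref{eq:fl_kernelized} yields $u_{[0,L-1]} = U g$; together with the matching output block this is \eqref{eq:fl}.

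For i), I would simply compose the two equivalences. By Lemma~\ref{lem:fl}, whose hypotheses (controllability and persistency of excitation of order $L+n_x$, the latter just established from the rank condition) are satisfied, $\vstack(u_{[0,L-1]}, y_{[0,L-1]}) \in \mathscr{B}_L$ holds if and only if some $g$ satisfies \eqref{eq:fl}. By part ii) the solution sets of \eqref{eq:fl} and \eqref{eq:fl_kernelized} coincide, so existence of such a $g$ is equivalent to existence of a $g$ satisfying \eqref{eq:fl_kernelized}. Concatenating the two equivalences proves i).

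I expect the main obstacle to be the converse direction of ii), namely recovering the genuine identity $u_{[0,L-1]} = U g$ from the merely projected identity $U^\top(u_{[0,L-1]} - U g) = 0$. Without full row rank of $U$ one could only conclude that $u_{[0,L-1]} - U g$ is orthogonal to the column space of $U$, leaving room for $u_{[0,L-1]} \neq U g$; it is exactly persistency of excitation of order $L$ (here guaranteed by the stronger order-$(L+n_x)$ condition) that makes the column space of $U$ equal to all of $\mathbb{R}^{L n_u}$ and hence $U^\top$ injective. The remaining steps — the Gram identities $\mathbf{K}_L = U^\top U$ and $\mathbf{k}(u_{[0,L-1]}) = U^\top u_{[0,L-1]}$, and the rank translation via $\rank(M^\top M) = \rank M$ — are routine once these two observations are in hand.
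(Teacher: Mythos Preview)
Your proposal is correct and follows essentially the same route as the paper: identify $\mathbf{K}_L = U^\top U$ and $\mathbf{k}(u_{[0,L-1]}) = U^\top u_{[0,L-1]}$, translate the Gram rank condition into persistency of excitation via $\rank M = \rank M^\top M$, obtain \eqref{eq:fl_kernelized} from \eqref{eq:fl} by left-multiplying with $U^\top$, and recover the converse from the injectivity of $U^\top$ (full row rank of $U$). Your treatment is somewhat more explicit than the paper's (which compresses the converse of ii) into the single remark that any $h$ can be uniquely recovered from $U^\top h$), but the argument is the same.
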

    
\begin{proof}
	Since $\kappa_L(\mathsf{z}, \mathsf{z}^\prime) = \mathsf{z}^\top \mathsf{z}^\prime$, the Gram matrix can be expressed as $\mathbf{K}_{L+n_x} = \mathscr{H}_{L+n_x}(\bar{u}_{[0, T-1]})^\top \mathscr{H}_{L+n_x}(\bar{u}_{[0, T-1]})$. Furthermore, since $\rank A = \rank A^\top A$ for any matrix $A$, $\bar{u}_{[0, T-1]}$ is persistently exciting of order $L + n_x$. Then, according to Lemma~\ref{lem:fl}, for any $\vstack(u_{[0, L-1]}, y_{[0, L-1]}) \in \mathscr{B}_L$ there exists $g \in \mathbb{R}^{T-L+1}$ that satisfies \eqref{eq:fl}. To prove Assertion i), we multiply $\mathscr{H}_{L}(\bar{u}_{[0, T-1]})^\top$ to the first $L n_u$ rows of \eqref{eq:fl} both to the left and right hand side, obtaining
	\[
	       \mqty[\mathscr{H}_{L}(\bar{u}_{_{[0, T-1]}})^\top u_{_{[0, L-1]}} \\ y_{_{[0, L-1]}}] = \mqty[\mathscr{H}_{L}(\bar{u}_{_{[0, T-1]}})^\top \mathscr{H}_L(\bar{u}_{_{[0, T-1]}}) \\ \mathscr{H}_L(\bar{y}_{_{[0, T-1]}})] g.
    \]  
	Due to the chosen $\kappa_L$, the above equation is equivalent to \eqref{eq:fl_kernelized}. Since $\bar{u}_{[0, T-1]}$ is persistently exciting of order $L+n_x$, $\rank \mathscr{H}_{L}(\bar{u}_{[0, T-1]}) = L n_u$; hence,  any $h \in \mathbb{R}^{L n_u}$ can be uniquely deduced from $\mathscr{H}_{L}(\bar{u}_{[0, T-1]})^\top h$, cf. Assertion ii).
\end{proof}
\vspace*{1mm}

We note that the system \eqref{eq:fl_kernelized} contains more equations than \eqref{eq:fl} since---due to the persistency of excitation requirements---$\mathscr{H}_{L}(\bar{u}_{[0, T-1]})$ has more columns than rows. Thus,  \eqref{eq:fl_kernelized} is not directly computationally preferable to \eqref{eq:fl} in Lemma~\ref{lem:fl}.

\subsection{Exact Kernel Predictors for Nonlinear Systems}
Lemma~\ref{lem:fl_kernelized} can be used to derive alternative kernelized representations for some of the fundamental lemma extensions mentioned in the Introduction. For instance, the data-driven description of Hammerstein systems can be addressed via
\begin{equation} \label{eq:kernel}
    \kappa_L(u_{[0,L-1]}, u_{[0,L-1]}^\prime) \doteq \textstyle\sum\nolimits_{i=0}^{L-1} \kappa_u(u(i), u^\prime(i)),
\end{equation}
where $\kappa_u : \mathbb{R}^{n_u} \times \mathbb{R}^{n_u} \to \mathbb{R}$ is a reproducing kernel, whose argument is a pair of inputs $u(i), u^\prime(i) \in \mathbb{R}^{n_u}$~\cite{Berberich2020, Lian2021}.

Consider a discrete-time Hammerstein system, where the input $u(t) \in \mathbb{R}^{n_u}$ is fed through a nonlinear function $f: \mathbb{R}^{n_u} \to \mathbb{R}^{n_f}$, and, subsequently, through the LTI system from \eqref{eq:lti_system} with $B \in \mathbb{R}^{n_x \times n_f}$ and $D \in \mathbb{R}^{n_y \times n_f}$:
\begin{equation}
	\label{eq:hammerstein_system}
	\left\{
	\begin{array}{l l}
		x(t+1) &= A x(t) + B f(u(t)), \quad x(0) \doteq x^0,\\
		y(t) &= C x(t) + D f(u(t)). 
	\end{array}
	\right.
\end{equation}
Suppose we collect a $T$-long input-output trajectory $(\bar{u}_{[0,T-1]}$$,$ $\bar{y}_{[0,T-1]})$ generated by \eqref{eq:hammerstein_system}. Next, we provide a kernelized version of \cite[Proposition~5]{Berberich2020}.

\begin{lemma}[Hammerstein systems] \label{lem:fl_k_hammerstein}
    Assume there exist both a $n_\theta \times n_f$ real-valued matrix $\theta$ and a (possibly nonlinear) map $\phi_u: \mathbb{R}^{n_u} \to \mathbb{R}^{n_\theta}$ such that $f(u) = \theta^\top \phi_u(u), \forall \, u \in \mathbb{R}^{n_u}$ and let the corresponding kernel $\kappa_L$ be constructed via \eqref{eq:kernel} using $\kappa_u : (u, u^\prime) \in \mathbb{R}^{n_u} \times \mathbb{R}^{n_u} \mapsto \phi_u(u)^\top \phi_u(u^\prime) \in \mathbb{R}$. Suppose that $(A,B)$ is controllable and that  $\rank \theta = n_f$.

	If, for the collected data, $\rank \mathbf{K}_{L+n_x} = (L+n_x)n_\theta$, then the pair $(u_{[0,L-1]}, y_{[0,L-1]})$ is an $L$-long input-output trajectory of \eqref{eq:hammerstein_system} if and only if there exists $g \in \mathbb{R}^{T-L+1}$ that satisfies \eqref{eq:fl_kernelized} with the kernel from \eqref{eq:kernel}. \hfill {\small{$\Box$}}
\end{lemma}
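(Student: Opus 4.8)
The plan is to \emph{lift} the Hammerstein system \eqref{eq:hammerstein_system} to an LTI system driven by the feature sequence and then invoke Lemma~\ref{lem:fl_kernelized}. I would introduce the lifted signals $w(t) \doteq \phi_u(u(t)) \in \mathbb{R}^{n_\theta}$ and $\bar w(t) \doteq \phi_u(\bar u(t))$. Substituting $f(u) = \theta^\top \phi_u(u)$ into \eqref{eq:hammerstein_system} turns it into the LTI system
\begin{equation*}
    x(t+1) = A x(t) + \tilde B w(t), \qquad y(t) = C x(t) + \tilde D w(t),
\end{equation*}
with $\tilde B \doteq B\theta^\top \in \mathbb{R}^{n_x \times n_\theta}$, $\tilde D \doteq D\theta^\top$, state dimension $n_x$, and input dimension $n_\theta$; denote its length-$L$ manifest behavior by $\mathscr{B}_L^{w}$. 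The crucial, purely pointwise observation is that $(u_{[0,L-1]}, y_{[0,L-1]})$ is an $L$-long trajectory of \eqref{eq:hammerstein_system} if and only if $(w_{[0,L-1]}, y_{[0,L-1]}) \in \mathscr{B}_L^{w}$ with $w(t) = \phi_u(u(t))$: the forward direction holds by substitution, the backward one by reading the shared state trajectory back through $f = \theta^\top \phi_u$.

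Next I would establish controllability of $(A, \tilde B)$, which is precisely where the hypothesis $\rank\theta = n_f$ enters. Since $\theta \in \mathbb{R}^{n_\theta \times n_f}$ has full column rank $n_f$, its transpose $\theta^\top$ is surjective onto $\mathbb{R}^{n_f}$, whence $\mathrm{im}\,\tilde B = \mathrm{im}(B\theta^\top) = \mathrm{im}\,B$. Applying $A^k$ gives $\mathrm{im}\,A^k\tilde B = \mathrm{im}\,A^k B$ for every $k \geq 0$, so the controllability matrices of $(A,\tilde B)$ and $(A,B)$ span the same subspace, and controllability of $(A,B)$ carries over to $(A,\tilde B)$.

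The third step identifies the composite kernel \eqref{eq:kernel} with the linear kernel acting on feature trajectories. Writing $\Phi_L(u_{[0,L-1]}) \doteq \vstack(\phi_u(u(0)), \ldots, \phi_u(u(L-1)))$, the kernel \eqref{eq:kernel} factors as $\kappa_L(u_{[0,L-1]}, u'_{[0,L-1]}) = \Phi_L(u_{[0,L-1]})^\top \Phi_L(u'_{[0,L-1]})$, and the lifted $\mathsf{z}$-data $\mqty[\Phi_L(\mathsf{z}_1) & \cdots & \Phi_L(\mathsf{z}_N)]$ equals $\mathscr{H}_L(\bar w_{[0,T-1]})$. Consequently $\mathbf{K}_L = \mathscr{H}_L(\bar w_{[0,T-1]})^\top \mathscr{H}_L(\bar w_{[0,T-1]})$ and $\mathbf{k}(u_{[0,L-1]}) = \mathscr{H}_L(\bar w_{[0,T-1]})^\top w_{[0,L-1]}$, so that the composite-kernel equation \eqref{eq:fl_kernelized} coincides literally with the linear-kernel instance of \eqref{eq:fl_kernelized} written for the lifted $w$-system. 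Moreover, using $\rank A = \rank A^\top A$, the hypothesis $\rank\mathbf{K}_{L+n_x} = (L+n_x)n_\theta$ is equivalent to $\rank \mathscr{H}_{L+n_x}(\bar w_{[0,T-1]}) = (L+n_x)n_\theta$, i.e., to $\bar w_{[0,T-1]}$ being persistently exciting of order $L+n_x$.

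Finally, I would apply Lemma~\ref{lem:fl_kernelized} to the lifted LTI system (with $n_u$ replaced by $n_\theta$ and $u$ by $w$): under the controllability of $(A,\tilde B)$ and the persistency of excitation of $\bar w_{[0,T-1]}$ just established, $(w_{[0,L-1]}, y_{[0,L-1]}) \in \mathscr{B}_L^{w}$ if and only if there exists $g \in \mathbb{R}^{T-L+1}$ solving \eqref{eq:fl_kernelized}. Chaining this with the behavioral equivalence from the first paragraph and the kernel identification from the third yields the claim. I expect the main obstacle to be the rigorous transfer of controllability to the lifted system, together with confirming that the trajectory correspondence between \eqref{eq:hammerstein_system} and the lifted system is a genuine equivalence in \emph{both} directions; once these are secured, the statement reduces cleanly to Lemma~\ref{lem:fl_kernelized}.
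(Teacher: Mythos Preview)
Your proposal is correct and follows essentially the same route as the paper: lift to the auxiliary signal $w(t)=\phi_u(u(t))$, recognize $\kappa_L$ as the linear kernel on the stacked features so that the rank condition becomes persistency of excitation of $\bar w$, and then invoke Lemma~\ref{lem:fl_kernelized} for the resulting LTI system. You spell out two points the paper leaves implicit---the controllability transfer from $(A,B)$ to $(A,B\theta^\top)$ via $\rank\theta=n_f$ and the two-sided behavioral correspondence---but these are exactly the ingredients the paper is silently using when it writes ``substitute the input trajectories \ldots\ with the auxiliary signal trajectories.''
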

	
\begin{proof}
	The sum \eqref{eq:kernel} can be expressed in terms of the inner product $\phi(u_{[0,L-1]})^\top \phi(u_{[0,L-1]}^\prime)$, where $\phi(u_{[0,L-1]}) = \vstack(\phi_u(u(0)), \ldots, \phi_u(u(L-1)))$ is a concatenation of individual inputs wrapped in $\phi_u$. Let the auxiliary signal $v: \mathbb{Z} \to \mathbb{R}^{n_\theta}$ be defined as $v(t) \doteq \phi_u(u(t)) \in \mathbb{R}^{n_\theta}$, $\forall \, t \in \mathbb{N}_0$. Then, $\kappa_L(u_{[0,L-1]}, u_{[0,L-1]}^\prime) = v_{[0,L-1]}^\top v_{[0,L-1]}^\prime$, where $v_{[0, L - 1]} \doteq \left[\phi_u(u(i))\right]_{i=0}^{L-1}$ and $v^\prime_{[0, L - 1]} \doteq \left[\phi_u(u^\prime(i))\right]_{i=0}^{L-1}$, i.e., $\kappa_L$ turns into the linear kernel, identical to the one in Lemma~\ref{lem:fl_kernelized}. From $\rank \mathbf{K}_{L+n_x} = (L+n_x)n_\theta$ we conclude that $\bar{v}_{[0,T-1]} \doteq \left[\phi_u(\bar{u}(i))\right]_{i=0}^{T-1}$ is persistently exciting of order $L+n_x$. The proof then follows from Lemma~\ref{lem:fl_kernelized}, wherein we substitute the input trajectories $u_{[0,L-1]}$ and $\bar{u}_{[0,T-1]}$ with the auxiliary signal trajectories $v_{[0,L-1]}$ and $\bar{v}_{[0,T-1]}$, respectively.
\end{proof}
\vspace*{2mm}

\begin{table*} 
	\caption{Exact kernelized representations of the fundamental lemma for different linear and nonlinear system classes.}
	\begin{center}
		\label{tab:fl_lemmas}
		\begin{tabular}{p{0.17\linewidth} p{0.35\linewidth} p{0.4\linewidth}}  
			\hline
			&&\\[-0.8em]
			System class & Kernelized fundamental lemma & Underlying kernel structure \\
			\hline
			&&\\[-0.8em]
			Discrete-time LTI & $\mqty[\mathbf{k}(u_{[0, L-1]}) \\ y_{[0, L-1]}] = \mqty[\mathbf{K}_L \\ \mathscr{H}_L(\bar{y}_{[0, T-1]})] g$ & $\kappa_L(u_{[0,L-1]}, u_{[0,L-1]}^\prime) \doteq u_{[0,L-1]}^\top u_{[0,L-1]}^\prime$ \\
			&&\\[-0.8em]
			\multirow[t]{2}{*}{Hammerstein} & $\mqty[\mathbf{k}(u_{[0, L-1]}) \\ y_{[0, L-1]}] = \mqty[\mathbf{K}_L \\ \mathscr{H}_L(\bar{y}_{[0, T-1]})] g$ & $\kappa_L(u_{[0,L-1]}, u_{[0,L-1]}^\prime) \doteq \sum_{i=0}^{L-1} \phi_u(u(i))^\top \phi_u(u^\prime(i))$ \\
			&&\\[-0.8em]
			Differentially flat (SISO) & $\mqty[\mathbf{k}(u_{[0,L-n_x-1]}, y_{[0,L-2]}) \\ y_{[0, L-1]}] = \mqty[\mathbf{K}_{L - n_x} \\ \mathscr{H}_L(\bar{y}_{[0, T-1]})] g$ & 
			$\begin{gathered}
				\kappa_{L - n_x}(u_{_{[0,L - n_x - 1]}}, y_{_{[0, L - 2]}}, u_{_{[0,L - n_x - 1]}}^\prime, y_{_{[0, L - 2]}}^\prime) \doteq \\ \textstyle\sum_{i=0}^{L - n_x - 1} \phi(u(i), y_{[i,i+n_x-1]})^\top \phi(u(i)^\prime, y_{[i,i+n_x-1]}^\prime)
			\end{gathered}$ \\
			&&\\[-0.8em]
			\hline
		\end{tabular}
	\end{center}
\end{table*}

\begin{lemma}[Posterior variance] \label{lem:zero_variance}
    Let $(\bar{u}_{[0,T-1]}, \bar{y}_{[0,T-1]})$ be a pre-recorded trajectory of \eqref{eq:hammerstein_system} with $\bar{v}_{[0,T-1]}$ persistently exciting of order $L$ and consider the setting from Lemma~\ref{lem:fl_k_hammerstein}. Then, for every $L$-long input trajectory $u_{[0, L - 1]}$ of \eqref{eq:hammerstein_system}, we have $\sigma^2(u_{[0, L - 1]}) = 0$ for the posterior variance \eqref{eq:gp_variance}. 
    \hfill {\small{$\Box$}}
\end{lemma}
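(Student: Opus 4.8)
The plan is to reduce the posterior variance to a squared projection residual and then close the argument with persistency of excitation. First I would reuse the auxiliary signal $v(t) \doteq \phi_u(u(t))$ from the proof of Lemma~\ref{lem:fl_k_hammerstein}, and abbreviate $V \doteq \mathscr{H}_L(\bar{v}_{[0,T-1]}) \in \mathbb{R}^{L n_\theta \times (T-L+1)}$ together with $v \doteq v_{[0,L-1]} \in \mathbb{R}^{L n_\theta}$. Since $\kappa_L$ collapses to the linear kernel in the $v$-coordinates, the three ingredients of \eqref{eq:gp_variance} read $\kappa_L(u_{[0,L-1]}, u_{[0,L-1]}) = v^\top v$, $\mathbf{K}_L = V^\top V$, and $\mathbf{k}(u_{[0,L-1]}) = V^\top v$ (its $i$-th entry being $v^\top$ against the $i$-th column $\bar{v}_{[i-1,L+i-2]}$ of $V$). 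Substituting these yields $\sigma^2(u_{[0,L-1]}) = v^\top v - v^\top V (V^\top V)^\dagger V^\top v$.

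Next I would invoke the standard pseudo-inverse identity $V(V^\top V)^\dagger V^\top = V V^\dagger$, which equals the orthogonal projector $P$ onto $\mathrm{range}(V)$. Hence the variance rewrites as $\sigma^2(u_{[0,L-1]}) = v^\top (I - P) v = \norm{(I - P)v}_2^2$, i.e., the squared norm of the component of $v$ orthogonal to the column space of $V$.

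Finally, I would close via the rank hypothesis. Persistency of excitation of $\bar{v}_{[0,T-1]}$ of order $L$ means $\rank V = L n_\theta$; since $V$ has exactly $L n_\theta$ rows, it has full row rank and $\mathrm{range}(V) = \mathbb{R}^{L n_\theta}$. Consequently $P = I$ and $v \in \mathrm{range}(V)$ for every admissible $v$, so that $\sigma^2(u_{[0,L-1]}) = 0$.

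The main obstacle is largely bookkeeping: confirming that $\kappa_L$ genuinely reduces to the linear kernel in $v$ so that the Gramian and the centered kernel vector take the clean forms $V^\top V$ and $V^\top v$, and applying the identity $V(V^\top V)^\dagger V^\top = V V^\dagger$ for a possibly rank-deficient $V$ (it holds in general, e.g., via the SVD). I would also remark that the conclusion in fact holds for \emph{any} $u_{[0,L-1]} \in \mathbb{R}^{L n_u}$, not merely for valid system trajectories, because the full-row-rank property of $V$ makes every $v \in \mathbb{R}^{L n_\theta}$ reachable; the restriction to trajectories of \eqref{eq:hammerstein_system} is therefore not needed for the variance to vanish.
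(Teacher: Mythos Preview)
Your argument is correct and arrives at the same conclusion as the paper, but the route is noticeably cleaner. Both proofs start identically: pass to the auxiliary signal $v(t)=\phi_u(u(t))$, so that with $V\doteq\mathscr{H}_L(\bar v_{[0,T-1]})$ and $v\doteq v_{[0,L-1]}$ one has $\kappa_L(u_L,u_L)=v^\top v$, $\mathbf{K}_L=V^\top V$, and $\mathbf{k}(u_L)=V^\top v$. From there the paper augments $V$ by the column $v$, multiplies by the transpose of the augmented matrix, and then hand-picks coefficients $g_1=\tfrac12\,\mathbf{K}_L^\dagger\mathbf{k}(u_L)$, $g_2=\tfrac12$ so that the second block row collapses to $\kappa_L(u_L,u_L)=\mathbf{k}(u_L)^\top\mathbf{K}_L^\dagger\mathbf{k}(u_L)$. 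You instead go straight through the identity $V(V^\top V)^\dagger V^\top=VV^\dagger$, recognise this as the orthogonal projector onto $\mathrm{range}(V)$, and read off $\sigma^2(u_L)=\|(I-P)v\|_2^2$; full row rank of $V$ then gives $P=I$. This is a more transparent linear-algebra computation and avoids the somewhat ad~hoc coefficient choice in the paper's proof, at the cost of invoking (and justifying, as you note via the SVD) the projector identity for possibly rank-deficient $V$. Your closing remark that the restriction to system trajectories is immaterial is also accurate: once $V$ has full row rank, every $v\in\mathbb{R}^{Ln_\theta}$ lies in its range, so $\sigma^2(u_{[0,L-1]})=0$ for \emph{any} $u_{[0,L-1]}\in\mathbb{R}^{Ln_u}$.
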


\begin{proof}
    Let $u_{[0, L - 1]}$ be any $L$-long input trajectory of \eqref{eq:hammerstein_system}. We construct $v_{[0, L - 1]} \doteq \left[\phi_u(u(i))\right]_{i=0}^{L-1}$. Since $\mathscr{H}_L(\bar{v}_{[0,T-1]})$ is of full row rank, there exist $g_1 \in \mathbb{R}^{T-L+1}$ and $g_2 \in \mathbb{R}$ solving
    \[
        v_{[0, L - 1]} = \mqty[\mathscr{H}_L(\bar{v}_{[0,T-1]}) & v_{[0, L - 1]}] \vstack(g_1, g_2).
    \]
    Multiplying both sides of the above equation from the left with $\mqty[\mathscr{H}_L(\bar{v}_{[0,T-1]}) & v_{[0, L - 1]}]^\top$ we obtain
    \[
        \mqty[\mathbf{k}(u_L) \\ \kappa_L(u_L, u_L)] = \mqty[\mathbf{K}_L & \mathbf{k}(u_L) \\ \mathbf{k}(u_L)^\top & \kappa_L(u_L, u_L)] \mqty[g_1 \\ g_2],
    \]
    wherein we use the shorthand variable $u_L \doteq u_{[0, L-1]}$. Since $\rank \mathscr{H}_L(\bar{v}_{[0,T-1]}) = L n_\theta$, the column span of $\mathscr{H}_L(\bar{v}_{[0,T-1]})$ contains $v_{[0, L - 1]}$. Therefore, $\mathbf{k}(u_L) = \mathscr{H}_L(\bar{v}_{[0,T-1]})^\top v_{[0, L - 1]}$ is in the column span of $\mathbf{K}_L = \mathscr{H}_L(\bar{v}_{[0,T-1]})^\top \mathscr{H}_L(\bar{v}_{[0,T-1]})$ meaning that the linear system $\mathbf{K}_L g_1 = \mathbf{k}(u_L)$ has at least one solution. Henceforth, instead of $g_1 = 0$ and $g_2 = 1$, we pick $g_1 \doteq (1/2) \cdot \mathbf{K}_L^\dagger \mathbf{k}(u_L)$ and $g_2 \doteq 1/2$. Then, expanding the second block row of the above equation, we have
    \[
        \kappa_L(u_L, u_L) = (\nicefrac{1}{2}) \cdot \mathbf{k}(u_L)^\top \mathbf{K}_L^\dagger \mathbf{k}(u_L) + (\nicefrac{1}{2}) \cdot \kappa_L(u_L, u_L).
    \]
    Hence, $\kappa_L(u_L, u_L) - \mathbf{k}(u_L)^\top \mathbf{K}_L^\dagger \mathbf{k}(u_L) = 0$, cf. \eqref{eq:gp_variance}. 
\end{proof}
\vspace*{1mm}

Similar to \cite{Alsalti2021}, we consider SISO flat nonlinear systems
\begin{equation} \label{eq:siso_flat}
    \left\{
	\begin{array}{l l}
		x(t+1) &= f(x(t), u(t)), \quad x(0) \doteq x^0,\\
		y(t) &= h(x(t)), 
	\end{array}
	\right.
\end{equation}
with $x(t) \in \mathbb{R}^{n_x}$, $u(t) \in \mathbb{R}$, and $y(t) \in \mathbb{R}$, whereas $f: \mathbb{R}^{n_x} \times \mathbb R \to \mathbb R^{n_x}$ and $h: \mathbb R^{n_x} \to \mathbb R$ are smooth functions with $f(0,0) = 0$ and $h(0) = 0$. As shown in \cite{Alsalti2021}, under suitable assumptions, the above model can be cast into LTI dynamics
\begin{equation}
    \left\{
	\begin{array}{l l}
		\xi(t+1) &= A_c\xi(t) + B_c v(t),\\
		y(t) &= C_c \xi(t), 
	\end{array}
	\right.
\end{equation}
where $\xi(t) \doteq y_{[t, t+n_x-1]} \in \mathbb R^{n_x}$ is the state $x(t)$ after transformation, $v(t) \in \mathbb R$ is the synthetic input which results from the inverse of the static state feedback control law, and the matrix triple $(A_c, B_c, C_c)$ appears in Brunovsky canonical form. Following \cite{Alsalti2021}, we assume that $v(t)$ admits a linear combination $v(t) \doteq \theta^\top \phi(u(t), y_{[t,t+n_x-1]})$, $\forall \, t \in \mathbb{N}_0$ with $n_\theta$ terms in $\theta$ and $n_\theta$ features in $\phi$. Next we define the kernel $\kappa_{L-n_x}$ for some $L > n_x$ as
\begin{equation} \label{eq:kernel_flat}
    \begin{split}
        &\kappa_{L - n_x}(u_{[0,L - n_x - 1]}, y_{[0, L - 2]}, u_{[0,L - n_x - 1]}^\prime, y_{[0, L - 2]}^\prime) \doteq \\
        &\textstyle\sum\nolimits_{i=0}^{L - n_x - 1} \kappa_u(u(i), y_{[i, i + n_x - 1]}, u^\prime(i), y_{[i, i + n_x - 1]}^\prime),
    \end{split}
\end{equation}
where $\kappa_u(\cdot, \cdot, \cdot, \cdot) \doteq \phi(u(i), y_{[i,i+n_x-1]})^\top \phi(u^\prime(i), y_{[i,i+n_x-1]}^\prime)$. Again, assuming that $\rank \theta = 1$, we have a controllable pair $(A_c, B_c \theta^\top)$. At this point, \cite{Alsalti2021} leverages the original fundamental lemma \eqref{eq:fl} to derive its extension for SISO flat nonlinear systems \eqref{eq:siso_flat}.

Suppose that $(\bar{u}_{[0, T - n_x - 1]}, \bar{y}_{[0, T-1]})$ is a pre-recorded input-output trajectory of \eqref{eq:siso_flat}. Let $\mathbf{K}_L$ be the Gram matrix of $\kappa_L$ from \eqref{eq:kernel_flat} evaluated at $\mathsf{z}_{i+1} \doteq (\bar{u}_{[i, i+L-1]}, \bar{y}_{[i, i+L+n_x-2]})$, $i \in \mathbb{I}_{[0,T - L - n_x]}$. Similarly, let $\mathbf{K}_{L - n_x}$ be the Gram matrix of $\kappa_{L-n_x}$ from \eqref{eq:kernel_flat} evaluated at $\mathsf{z}_{i+1} \doteq (\bar{u}_{[i, i+L-n_x-1]},$ $\bar{y}_{[i, i+L-2]})$, $i \in \mathbb{I}_{[0,T - L]}$.
\begin{lemma}[Kernel predictor for SISO flat systems] \label{lem:fl_k_flat}
    If with the kernel \eqref{eq:kernel_flat}, the Gram matrix satisfies $\rank \mathbf{K}_L = L n_\theta$ for $(\bar{u}_{[0, T - n_x - 1]}$, $\bar{y}_{[0, T-1]})$, then $(u_{[0,L-n_x-1]}, y_{[0,L-1]})$ is a trajectory of the system \eqref{eq:siso_flat} if and only if there exists a vector $g \in \mathbb{R}^{T-L+1}$ such that
    \[
        \mqty[\mathbf{k}(u_{[0,L-n_x-1]}, y_{[0,L-2]}) \\ y_{[0, L-1]}] = \mqty[\mathbf{K}_{L - n_x} \\ \mathscr{H}_L(\bar{y}_{[0, T-1]})] g,
    \]
    where $\mathbf{k}(u_{[0,L-n_x-1]}, y_{[0,L-2]})$ stacks $\kappa_{L-n_x}(\bar{u}_{[i, i+L-n_x-1]},$ $\bar{y}_{[i, i+L-2]}, u_{[0,L-n_x-1]}, y_{[0,L-2]})$ for all $i \in \mathbb{I}_{[0,T - L]}$. \hfill {\small{$\Box$}}
\end{lemma}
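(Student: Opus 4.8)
The plan is to reproduce, in the flat setting, the two-step mechanism used for Lemma~\ref{lem:fl_k_hammerstein}: first lift the problem to a controllable LTI system to which a fundamental-lemma statement applies, then convert the resulting Hankel identity into its kernel form by left-multiplication with a Hankel transpose, exactly as in the proof of Lemma~\ref{lem:fl_kernelized}. To this end I would introduce the lifted signal $\psi(t) \doteq \phi(u(t), y_{[t, t+n_x-1]}) \in \mathbb{R}^{n_\theta}$, so that $v(t) = \theta^\top \psi(t)$ and the kernel \eqref{eq:kernel_flat} collapses to the linear kernel evaluated on $\psi$. Writing $\bar{\psi}(t) \doteq \phi(\bar{u}(t), \bar{y}_{[t, t+n_x-1]})$ for $t \in \mathbb{I}_{[0, T-n_x-1]}$, this yields $\mathbf{K}_{L-n_x} = \mathscr{H}_{L-n_x}(\bar{\psi}_{[0, T-n_x-1]})^\top \mathscr{H}_{L-n_x}(\bar{\psi}_{[0, T-n_x-1]})$, $\mathbf{K}_{L} = \mathscr{H}_{L}(\bar{\psi}_{[0, T-n_x-1]})^\top \mathscr{H}_{L}(\bar{\psi}_{[0, T-n_x-1]})$, and $\mathbf{k}(u_{[0,L-n_x-1]}, y_{[0,L-2]}) = \mathscr{H}_{L-n_x}(\bar{\psi}_{[0, T-n_x-1]})^\top \psi_{[0, L-n_x-1]}$.

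Next I would record the two structural facts inherited from the hypotheses. Since $\rank A = \rank A^\top A$, the assumption $\rank \mathbf{K}_L = L n_\theta$ is equivalent to $\mathscr{H}_L(\bar{\psi}_{[0, T-n_x-1]})$ having full row rank, i.e., to $\bar{\psi}_{[0, T-n_x-1]}$ being persistently exciting of order $L$; this in turn implies persistency of excitation of every lower order, so $\mathscr{H}_{L-n_x}(\bar{\psi}_{[0, T-n_x-1]})$ has full row rank $(L-n_x)n_\theta$ as well. Moreover, because $(A_c, B_c)$ is controllable and $\rank \theta = 1$, the pair $(A_c, B_c \theta^\top)$ is controllable, so $\xi(t+1) = A_c \xi(t) + B_c \theta^\top \psi(t)$, $y(t) = C_c \xi(t)$ is a controllable LTI system with lifted input $\psi$ and state $\xi(t) = y_{[t, t+n_x-1]}$.

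I would then invoke the flat-system fundamental lemma of \cite{Alsalti2021} for this lifted LTI system. The Brunovsky structure, in which the state equals a window of $n_x$ consecutive outputs, is precisely what produces the asymmetric depths: an $(L-n_x)$-long lifted-input trajectory together with the initializing state determines the full output $y_{[0, L-1]}$ of length $L$, and the required excitation order $(L-n_x) + n_x = L$ is exactly what $\rank\mathbf{K}_L = Ln_\theta$ supplies. Hence $(u_{[0,L-n_x-1]}, y_{[0,L-1]})$ is a trajectory of \eqref{eq:siso_flat} if and only if there is a $g \in \mathbb{R}^{T-L+1}$ with
\[
    \mqty[\psi_{[0, L-n_x-1]} \\ y_{[0, L-1]}] = \mqty[\mathscr{H}_{L-n_x}(\bar{\psi}_{[0, T-n_x-1]}) \\ \mathscr{H}_L(\bar{y}_{[0, T-1]})] g.
\]

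Finally, exactly as in the proof of Lemma~\ref{lem:fl_kernelized}, I would left-multiply the first block row of this identity by $\mathscr{H}_{L-n_x}(\bar{\psi}_{[0, T-n_x-1]})^\top$ on both sides, turning it into $\mathbf{k}(u_{[0,L-n_x-1]}, y_{[0,L-2]}) = \mathbf{K}_{L-n_x}\, g$ while leaving the output block untouched; this is the asserted kernelized equation. Equivalence in both directions holds because the full row rank of $\mathscr{H}_{L-n_x}(\bar{\psi}_{[0, T-n_x-1]})$ makes its transpose injective, so the multiplication loses no information and $\psi_{[0, L-n_x-1]}$ is uniquely recovered from $\mathbf{k}(u_{[0,L-n_x-1]}, y_{[0,L-2]})$; note also that, since $\mathbf{k}$ is assembled from $\phi(u(i), y_{[i,i+n_x-1]})$ evaluated at the query, the recovered relation $y(t+n_x) = \theta^\top\psi(t)$ is exactly the flat dynamics, so no separate consistency argument for $(u,y)$ is needed. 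I expect the main obstacle to be the clean justification of the asymmetric depths---that the output Hankel carries depth $L$ while the lifted-input and kernel blocks carry depth $L-n_x$---which rests on the flat identity $\xi(t) = y_{[t, t+n_x-1]}$ and on matching the excitation order delivered by $\rank\mathbf{K}_L = Ln_\theta$ with the $n_x$-shifted window bookkeeping.
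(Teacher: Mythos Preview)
Your proposal is correct and follows essentially the same approach as the paper's proof, which is only a brief sketch: invoke the flat-system fundamental lemma from \cite[Proposition~1]{Alsalti2021} for the lifted LTI system with input $\psi(t)=\phi(u(t),y_{[t,t+n_x-1]})$, then apply the linear transformation (left-multiplication by the Hankel transpose) exactly as in the proof of Lemma~\ref{lem:fl_kernelized}. Your elaboration of the asymmetric depths and the matching of the excitation order $(L-n_x)+n_x=L$ with the rank condition on $\mathbf{K}_L$ is precisely the bookkeeping that the paper leaves implicit.
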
   

\begin{proof}
    Similar to the one of Lemma~\ref{lem:fl_k_hammerstein}; we utilize Equation 12 from \cite[Proposition~1]{Alsalti2021} and then perform a linear transformation as used in the proof of Lemma~\ref{lem:fl_kernelized}.
\end{proof}
\vspace*{1mm}

\begin{remark}[Kernel rank for infinite feature dimension]\label{rem:kernel_inf}
	For most of the popular kernel functions that can be chosen in place of $\kappa_u(\cdot,\cdot)$, the underlying feature space is infinite-dimensional and so is the corresponding RKHS. In the limit for $n_\theta \to \infty$, the condition $\rank \mathbf{K}_{L+n_x} = (L+n_x)n_\theta$ will never be satisfied using a finite amount of data $N = \left\vert \mathcal{D} \right\vert$. Still, the output prediction performance can only improve as the rank of $\mathbf{K}_{L+n_x} \in \mathbb{R}^{N \times N}$ is non-decreasing with $N$ \cite{Lian2021}. \hfill {\small{$\Box$}}
\end{remark}

\begin{remark}[Link to the representer theorem]
    At this point, it is helpful to put the preceding results into perspective. Recall that the seminal representer theorem gives that the minimizer of a regression in a RKHS leads to a linear combination of kernel evaluations~\eqref{eq:rkhs_representation}. Using the so-called kernel trick and the feature matrices $\Phi$, one obtains the usual explicit kernel predictor \eqref{eq:reg_kernel_sol}, which also resembles the structure of mean GP regression. The results of Lemma \ref{lem:fl_k_hammerstein} and   \ref{lem:fl_k_flat} are cases for which the linear ansatz of the representer theorem leads to exact models of nonlinear systems which (at least for noise-free data)  generalize perfectly to new data. Moreover, our results underpin that the \textit{natural} nonlinear extension of persistency of excitation are rank conditions on the kernel-based Gramians; see also \cite{Lazar2023} for a similar insight in feature spaces. \hfill {\small{$\Box$}}
\end{remark}

The (nonlinear) system classes discussed in this paper as well as their kernelized fundamental lemma representations are summarized in Table~\ref{tab:fl_lemmas}. In the rightmost column, we list the kernels that yield exact representations provided that the feature space is finite-dimensional, cf. Remark~\ref{rem:kernel_inf}. For the considered cases, the second equation of the kernelized implicit predictors is identical to the original fundamental lemma~\eqref{eq:fl}. This reflects the linearity of~\eqref{eq:regression_kernel_trick} in $\mathbf{y}$ induced by the kernel trick.
\begin{remark}[Kernel and feature selection]   
    Although finite-dimensional RKHS expansions are in general not known, Lemma~\ref{lem:zero_variance} opens an avenue towards kernel selection by minimizing the posterior variance \eqref{eq:gp_variance} for some set of interest $\mathcal{Z}_0 \subseteq \mathcal{Z}$ and over a class of kernel functions. Both the implementation and efficacy of this approach remain to be clarified in the future. A major difference of our kernel regression perspective to  \cite{Lazar2023}, is that our results suggest searching over a single kernel function $\kappa$ as opposed to selecting features comprising $\phi$ \cite{Lazar2023}. Furthermore, note that the posterior variance \eqref{eq:gp_variance} can be made arbitrarily small by generating a dense set of data points (i.e, trajectories). The condition $\sigma(\mathsf{z}) = 0$, however, strongly depends on the considered class of kernel functions.
    A comparison of both approaches is subject of future work. \hfill {\small{$\Box$}}
\end{remark}

\section{Conclusion} \label{sec:conclusion}
This paper discussed the links between the fundamental lemma and kernel regression. For the original fundamental lemma by Willems and co-authors, an equivalent, kernelized result can be obtained using a trivial, linear kernel. We also show that some of the existing fundamental lemma generalizations for nonlinear systems admit respective kernelized versions. This leads to the observation that some nonlinear extensions of the fundamental lemma can be understood as special (exact) cases of the representer theorem for regression problems in reproducing kernel Hilbert spaces. Future work will discuss the extension to noisy data and the problem of kernel construction.

\appendices

\section*{Appendix: Proofs}
\subsection{Posterior variance with singular Gramian}
\label{app:gp_proof}
Let $\mathbf{y}_j \in \mathbb{R}^N$ be the $j$-th column of the $\mathsf{y}$-data matrix $\mathbf{y} \in \mathbb{R}^{N \times n}$ from \eqref{eq:reg_kernel_sol}, i.e., $i$-th entry of $\mathbf{y}_j$ is the observation of $f_j(\mathsf{z}_i)$. In the GP context \cite{Rasmussen2005}, we introduce the random variables $\Tilde{Y}$ and $Y$ as the estimates of $f_j(\mathsf{z})$ and $\vstack(f_j(\mathsf{z}_1), \ldots, f_j(\mathsf{z}_N)) \in \mathbb{R}^N$, respectively. Their prior distribution is the joint Gaussian
\[
    \vstack(Y, \Tilde{Y}) \sim \mathcal{N}\left(0, \vstack\left(\mqty[\mathbf{K} & \mathbf{k}(\mathsf{z})], \mqty[\mathbf{k}(\mathsf{z})^\top & \kappa(\mathsf{z}, \mathsf{z})]\right) \right).
\]
We write the joint covariance as a matrix $A$ with blocks $A_{11} \doteq \mathbf{K}$, $A_{12} \doteq \mathbf{k}(\mathsf{z})$, and $A_{22} \doteq \kappa(\mathsf{z}, \mathsf{z})$. Since $A$ is symmetric positive semi-definite, there exists a (not necessarily unique) Cholesky decomposition $A = L L^\top$. We decompose $L$ into blocks $L_{11}, L_{12}$, and $L_{22}$. Rewriting $A = L L^\top$ yields $A_{11} = L_{11} L_{11}^\top$, $A_{12} = L_{11} L_{21}^\top$, and $A_{22} = L_{21} L_{21}^\top + L_{22} L_{22}^\top$. Next, we reformulate $Y$ and $\Tilde{Y}$ from above via $Y = L_{11} \xi_1$ and $\Tilde{Y} = L_{21} \xi_1 + L_{22} \xi_2$, where $\xi_1 \sim \mathcal{N}(0, I)$ and $\xi_2 \sim \mathcal{N}(0, 1)$ are independent. From the former equation, we extract $\xi_1 = L_{11}^\dagger Y + (I - L_{11}^\dagger L_{11}) \hat{\xi}$ with $\hat{\xi} \sim \mathcal{N}(0, I)$ independent of $\xi_1, \xi_2$ and $(I - L_{11}^\dagger L_{11}) \hat{\xi}$ living in the kernel of $L_{11}$. Assuming exact measurements, i.e., $f(\mathsf{z}_i) = \mathsf{y}_i, \forall \, i \in \mathbb{I}_{[1,N]}$, we have that $Y$ attains a deterministic value $Y \doteq \mathbf{y}_j$. This substitution yields
\begin{equation} \label{eq:gp_posterior}
     \Tilde{Y} = L_{21} L_{11}^\dagger \mathbf{y}_j + L_{21} (I - L_{11}^\dagger L_{11}) \hat{\xi} + L_{22} \xi_2.
\end{equation}

\begin{proposition}
    The mean of $\Tilde{Y}$ in \eqref{eq:gp_posterior} is $\mathbf{k}(\mathsf{z})^\top \mathbf{K}^\dagger \mathbf{y}_j$, whereas its variance is $\sigma^2(\mathsf{z}) = \kappa(\mathsf{z}, \mathsf{z}) - \mathbf{k}(\mathsf{z})^\top \mathbf{K}^\dagger \mathbf{k}(\mathsf{z})$. \hfill {\small{$\Box$}}
\end{proposition}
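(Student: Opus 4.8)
The plan is to read off the mean and variance of $\tilde{Y}$ directly from the explicit representation \eqref{eq:gp_posterior}, and then to translate the Cholesky blocks $L_{11}, L_{21}, L_{22}$ back into $\mathbf{K}$, $\mathbf{k}(\mathsf{z})$, and $\kappa(\mathsf{z},\mathsf{z})$ via the block identities $A_{11} = L_{11}L_{11}^\top = \mathbf{K}$, $A_{12} = L_{11}L_{21}^\top = \mathbf{k}(\mathsf{z})$, and $A_{22} = L_{21}L_{21}^\top + L_{22}L_{22}^\top = \kappa(\mathsf{z},\mathsf{z})$. The single algebraic tool that does all the work is the pseudo-inverse identity $M^\top (MM^\top)^\dagger = M^\dagger$, valid for any real matrix $M$ (checked at once via the singular value decomposition). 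Applied with $M = L_{11}$ it gives $L_{11}^\top(L_{11}L_{11}^\top)^\dagger = L_{11}^\dagger$, so that $\mathbf{k}(\mathsf{z})^\top \mathbf{K}^\dagger = L_{21}L_{11}^\top(L_{11}L_{11}^\top)^\dagger = L_{21}L_{11}^\dagger$.

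For the mean I would note that in \eqref{eq:gp_posterior} the terms carrying $\hat{\xi}$ and $\xi_2$ are zero-mean, whence $\mathbb{E}[\tilde{Y}] = L_{21}L_{11}^\dagger \mathbf{y}_j$; the identity above rewrites this as $\mathbf{k}(\mathsf{z})^\top \mathbf{K}^\dagger \mathbf{y}_j$, as claimed. For the variance, the deterministic first term drops out, and since $\hat{\xi} \sim \mathcal{N}(0,I)$ and $\xi_2 \sim \mathcal{N}(0,1)$ are independent, the variance of the scalar $\tilde{Y}$ is $L_{21}(I - L_{11}^\dagger L_{11})(I - L_{11}^\dagger L_{11})^\top L_{21}^\top + L_{22}L_{22}^\top$. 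Because $L_{11}^\dagger L_{11}$ is the orthogonal projector onto the row space of $L_{11}$, the matrix $I - L_{11}^\dagger L_{11}$ is symmetric and idempotent, so this collapses to $L_{21}(I - L_{11}^\dagger L_{11})L_{21}^\top + L_{22}L_{22}^\top$.

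It remains to massage this into \eqref{eq:gp_variance}. I would expand the first term and invoke the $A_{22}$ block identity to replace $L_{21}L_{21}^\top + L_{22}L_{22}^\top$ by $\kappa(\mathsf{z},\mathsf{z})$, leaving $\kappa(\mathsf{z},\mathsf{z}) - L_{21}L_{11}^\dagger L_{11}L_{21}^\top$. A second application of the pseudo-inverse identity then identifies $L_{21}L_{11}^\dagger L_{11}L_{21}^\top = L_{21}L_{11}^\top(L_{11}L_{11}^\top)^\dagger L_{11}L_{21}^\top = \mathbf{k}(\mathsf{z})^\top \mathbf{K}^\dagger \mathbf{k}(\mathsf{z})$, which yields exactly $\sigma^2(\mathsf{z})$. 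The step I expect to be the crux is this variance computation in the singular case: unlike the textbook full-rank derivation, where $L_{11}^\dagger L_{11} = I$ forces the $\hat{\xi}$-term to vanish and $L_{22}L_{22}^\top$ alone gives the answer, here the projection term $L_{21}(I - L_{11}^\dagger L_{11})L_{21}^\top$ is genuinely nonzero and must be retained; its recombination with $L_{22}L_{22}^\top$ through the $A_{22}$ identity is precisely what makes the pseudo-inverse formula come out clean.
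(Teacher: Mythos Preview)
Your proposal is correct and follows essentially the same route as the paper: both read off the mean and variance from \eqref{eq:gp_posterior}, invoke the pseudo-inverse identity $L_{11}^\top(L_{11}L_{11}^\top)^\dagger = L_{11}^\dagger$ to rewrite $L_{21}L_{11}^\dagger$ as $\mathbf{k}(\mathsf{z})^\top \mathbf{K}^\dagger$, and collapse the $\hat\xi$-contribution via the projector property of $L_{11}^\dagger L_{11}$. The only cosmetic difference is the direction in which the variance identity is established---you expand the variance of $\tilde Y$ and show it equals $\sigma^2(\mathsf{z})$, whereas the paper expands $\sigma^2(\mathsf{z})$ and matches it to the variance---but the algebra is identical.
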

\begin{proof}
    The mean of $\Tilde{Y}$ is $L_{21} L_{11}^\dagger \mathbf{y}$, while the variance of $\Tilde{Y}$ is $L_{21} (I - L_{11}^\dagger L_{11}) (I - L_{11}^\dagger L_{11})^\top L_{21}^\top + L_{22} L_{22}^\top$. Utilizing $L_{11}^\top (L_{11} L_{11}^\top)^\dagger = L_{11}^\top (L_{11}^\top)^\dagger L_{11}^\dagger = L_{11}^\dagger$, we get $\mathbf{k}(\mathsf{z})^\top \mathbf{K}^\dagger = A_{12}^\top A_{11}^\dagger = L_{21} L_{11}^\top (L_{11} L_{11}^\top)^\dagger = L_{21} L_{11}^\dagger$. Hence, for the mean of $\Tilde{Y}$ we have $L_{21} L_{11}^\dagger \mathbf{y}_j = \mathbf{k}(\mathsf{z})^\top \mathbf{K}^\dagger \mathbf{y}_j$, cf. \eqref{eq:reg_kernel_sol}. Similarly,
    \begin{equation*}
        \begin{split}
            &\sigma^2(\mathsf{z}) = \kappa(\mathsf{z}, \mathsf{z}) - \mathbf{k}(\mathsf{z})^\top \mathbf{K}^\dagger \mathbf{k}(\mathsf{z}) = A_{22} - A_{12}^\top A_{11}^\dagger A_{12} = \\
            &= L_{21} (I - L_{11}^\dagger L_{11}) L_{21}^\top + L_{22} L_{22}^\top.
        \end{split}\vspace*{-1mm}
    \end{equation*}
    With $L_{11}^\dagger L_{11} L_{11}^\top = L_{11}^\top$ 
    for the variance of $\Tilde{Y}$ in \eqref{eq:gp_posterior}, one gets
    $L_{21} (I - L_{11}^\dagger L_{11}) (I - L_{11}^\dagger L_{11})^\top L_{21}^\top + L_{22} L_{22}^\top = L_{21} (I - L_{11}^\dagger L_{11}) L_{21}^\top + L_{22} L_{22}^\top$. This completes the proof.
\end{proof}

\subsection{Proof of Lemma~\ref{lem:absolute_error}} \label{app:error_proof}
For each $j \in \mathbb{I}_{[1,n]}$, we have that $f_j^\star(\mathsf{z}) \doteq \mathbf{y}_j^\top \mathbf{K}^\dagger \mathbf{k}(\mathsf{z})$, where $\mathbf{y}_j \doteq \vstack(\mathbf{y}_{1j}, \ldots, \mathbf{y}_{Nj})$ is the $j$-th column of the $\mathsf{y}$-data $\mathbf{y} \in \mathbb{R}^{N \times n}$ from \eqref{eq:reg_kernel_sol}. Set $q \doteq \mathbf{K}^\dagger \mathbf{k}(\mathsf{z}) = \left[q_i\right]_{i=1}^N \in \mathbb{R}^N$. Assuming noise-free data, i.e., $\mathbf{y}_{ij} = f_j(\mathsf{z}_i), \forall \, i \in \mathbb{I}_{[1,N]}$, we get $f_j^\star(\mathsf{z}) = \sum_{i=1}^N \mathbf{y}_{ij} q_i = \sum_{i=1}^N f_j(\mathsf{z}_i) q_i = \sum_{i=1}^N \expval{f_j, \kappa(\cdot, \mathsf{z}_i)}_\mathcal{H} q_i$. Then,
\begin{equation*}
    \begin{split}
        \abs{f_j(\mathsf{z}) - f_j^\star(\mathsf{z})}^2 &= \abs{\expval{f_j, \kappa(\cdot, \mathsf{z})}_\mathcal{H} - \textstyle\sum\nolimits_{i=1}^N \expval{f_j, \kappa(\cdot, \mathsf{z}_i)}_\mathcal{H} q_i}^2 \\
        &\leq \norm{f_j}_\mathcal{H}^2 \norm{\kappa(\cdot, \mathsf{z}) - \textstyle\sum\nolimits_{i=1}^N \kappa(\cdot, \mathsf{z}_i)q_i}^2_\mathcal{H}.
    \end{split}
\end{equation*}
Expansion of the second norm in the product above yields:
\begin{equation*}
    \begin{split}
        &\norm{\kappa(\cdot, \mathsf{z})}^2_\mathcal{H} - 2 \textstyle\sum\nolimits_{i=1}^N \expval{\kappa(\cdot, \mathsf{z}), \kappa(\cdot, \mathsf{z}_i)}_\mathcal{H} q_i \, + \\
        &+ \textstyle\sum\nolimits_{i,j=1}^N \expval{\kappa(\cdot, \mathsf{z}_i), \kappa(\cdot, \mathsf{z}_j)}_\mathcal{H} q_i q_j =\kappa(\mathsf{z}, \mathsf{z}) - 2 \mathbf{k}(\mathsf{z})^\top q \, + \\
        &+ q^\top \mathbf{K} q = \kappa(\mathsf{z}, \mathsf{z}) - 2 \mathbf{k}(\mathsf{z})^\top \mathbf{K}^\dagger \mathbf{k}(\mathsf{z}) + \mathbf{k}(\mathsf{z})^\top \mathbf{K}^\dagger \mathbf{K} \mathbf{K}^\dagger \mathbf{k}(\mathsf{z}) = \\
        &= \kappa(\mathsf{z}, \mathsf{z}) - \mathbf{k}(\mathsf{z})^\top \mathbf{K}^\dagger \mathbf{k}(\mathsf{z}) = \sigma^2(\mathsf{z}).
    \end{split}
\end{equation*}
Therefore, $\abs{f_j(\mathsf{z}) - f_j^\star(\mathsf{z})}^2 \leq \norm{f_j}_\mathcal{H}^2 \sigma^2(\mathsf{z})$. We arrive at \eqref{eq:abs_error_bound} by summing up this inequality over all indices $j \in \mathbb{I}_{[1,n]}$. \hfill {\small{$\blacksquare$}}

\section*{Acknowledgements}
The authors thank the reviewers for very helpful comments.

\bibliography{main}

\end{document}